\documentclass[a4paper]{article}
\usepackage{pdfsync}
\usepackage[latin1]{inputenc}
\usepackage{graphicx}
\usepackage{amssymb,amsfonts,amsmath}
\usepackage{theorem}
\usepackage{hyperref}
\usepackage{graphicx}
\usepackage{multirow}
\usepackage{verbatim}

\usepackage{color}

\addtolength{\voffset}{-1cm} \addtolength{\textheight}{1.5cm}
\addtolength{\hoffset}{-1cm} \addtolength{\textwidth}{2cm}

\newtheorem{proposition}{Proposition}[section]
{\theorembodyfont{\rmfamily}

\newtheorem{remark}{Remark}[section]

}
\newenvironment{proof}[1][Proof]{\noindent\textbf{#1.} }{\newline \hspace*{\textwidth}\hspace*{-0,4cm} \rule{0.5em}{0.5em} \vspace{0,2cm}}

\begin{document}

\title{A note on the computation of geometrically defined relative velocities}

\author{Vicente J. Bol\'os\\
{\small Dpto. Matem\'aticas para la Econom\'{\i}a y la Empresa, Facultad de Econom\'{\i}a,}\\
{\small Universidad de Valencia. Avda. Tarongers s/n. 46022, Valencia,
Spain.}\\
{\small e-mail\textup{: \texttt{vicente.bolos@uv.es}}}}

\date{September 2011}

\maketitle

\begin{abstract}
We discuss some aspects about the computation of kinematic, spectroscopic, Fermi and astrometric relative velocities that are geometrically defined in general relativity. Mainly, we state that kinematic and spectroscopic relative velocities only depend on the 4-velocities of the observer and the test particle, unlike Fermi and astrometric relative velocities, that also depend on the acceleration of the observer and the corresponding relative position of the test particle, but only at the event of observation and not around it, as it would be deduced, in principle, from the definition of these velocities. Finally, we propose an open problem in general relativity that consists on finding intrinsic expressions for Fermi and astrometric relative velocities avoiding terms that involve the evolution of the relative position of the test particle. For this purpose, the proofs given in this paper can serve as inspiration.
\end{abstract}



\section{Introduction}

The concept of ``relative velocity'' of a distant test particle with respect to an observer is ambiguous in general relativity, in the sense that different coordinate systems and notions of simultaneity yield different results. This ambiguity led to consideration at the General Assembly of the International Astronomical Union (IAU), held in 2000 (see \cite{Soff03, Lind03}), introducing different definitions of ``radial velocity'' based on the Barycentric Celestial Reference System (BCRS).
However, a geometric concept of ``relative velocity'' should be intrinsic and independent from any coordinate system. Following this idea, four different geometric definitions were introduced in \cite{Bolos07}: \textit{kinematic}, \textit{Fermi}, \textit{spectroscopic} and \textit{astrometric} relative velocities. These four concepts each have full physical sense, and have proved to be useful in the study of properties of particular spacetimes \cite{KC10, Klein11, BK11} (see \cite{BK11} for a more detailed list of related works).

In this paper we discuss some aspects about the computation of these relative velocities, and it is organized as follows. In Section \ref{sec2} we present the framework, establishing the notation and defining some necessary concepts, introducing in Section \ref{sec2.1} the four geometric concepts of relative velocity. In Section \ref{sec3} we develop the discussion, making special interest on those aspects concerning the geometric elements that are needed for the computation of the relative velocities. We further study this question for the Fermi and astrometric relative velocities, in special relativity (Section \ref{sec3.1}) and general relativity (Section \ref{sec3.2}). Finally, we give some concluding remarks in Section \ref{sec4}.

\section{Definitions and notation}
\label{sec2}

We work in a lorentzian spacetime manifold $\left(
\mathcal{M},g\right) $, with $c=1$ and $\nabla $ the Levi-Civita
connection, using the Landau-Lifshitz Spacelike Convention (LLSC).
We suppose that $\mathcal{M}$ is a convex normal neighborhood; thus, given two events $p$ and $q$ in
$\mathcal{M}$, there exists a unique geodesic joining them. The parallel transport from $q$ to $p$
along this geodesic is denoted by $\tau _{qp}$. If $\beta
:I\rightarrow \mathcal{M}$ is a curve with $I\subseteq \mathbb{R}$ a
real interval, we identify $\beta $ with the image $\beta I$
(that is a subset in $\mathcal{M}$), in order to simplify the notation.
Vector fields are denoted by uppercase letters and vectors (defined at a single point) are denoted by lowercase letters.
If $u$ is a vector, then $u^{\bot }$ denotes the
orthogonal space of $u$. The projection of a vector $v$ onto
$u^{\bot }$ is the projection parallel to $u$. Moreover, if $x$ is
a spacelike vector, then $\Vert x\Vert :=g\left( x,x\right) ^{1/2}$ is the modulus of
$x$. If $X$ is a vector field,
$X_p$ denotes the unique vector of $X$ in
$T_p\mathcal{M}$.

In general, we say that a timelike world line $\beta $ is an
\textit{observer} (or a \textit{test particle}). Nevertheless, we say that a
future-pointing timelike unit vector $u$ in $T_{p}\mathcal{M}$ is
an \textit{observer at $p$}, identifying it with its 4-velocity.

A \textit{light ray} is a lightlike geodesic $\lambda $. A \textit{light ray from }$q$\textit{\ to }$p$
is a light ray $\lambda $ such that $q,p\in \lambda $ and $p$ is in the causal future of $q$.

\subsection{Geometrically defined relative velocities}
\label{sec2.1}

Throughout the paper, we consider an observer $\beta $ and a test particle $\beta '$ (parameterized by their proper times) with 4-velocities $U$, $U'$ respectively.
Let $p$ be an event of $\beta $, and $u:=U_p$. We define $q_{\mathrm{s}}$ as the event of $\beta '$ such that there exists a spacelike geodesic $\psi $ orthogonal to $u$ joining $p$ and $q_{\mathrm{s}}$ (see Figure \ref{diagram}); analogously, let $q_{\ell}$ be the event of $\beta '$ such that there exists a light ray $\lambda $ from $q_{\ell}$ to $p$. We denote $u'_{\mathrm{s}}:=U'_{q_{\mathrm{s}}}$ and $u'_{\ell}:=U'_{q_{\ell}}$ in order to simplify the notation.

\begin{figure}[tbp]
\begin{center}
\includegraphics[width=0.33\textwidth]{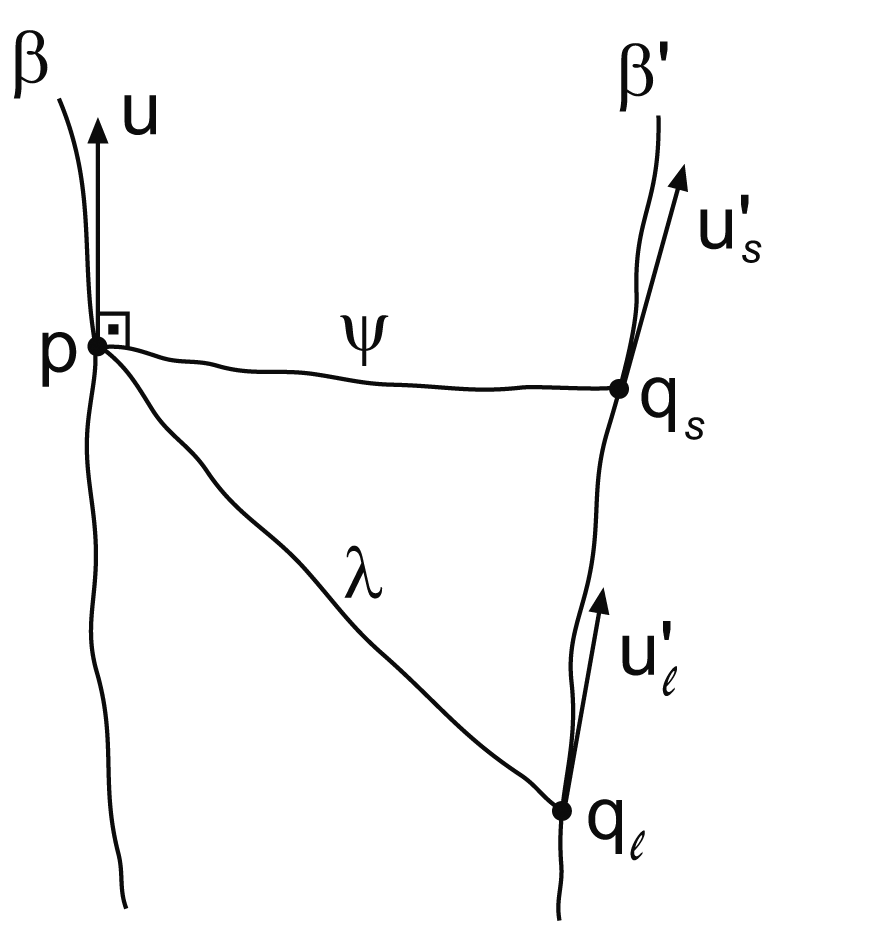}
\end{center}
\caption{Scheme of the elements involved in the study of the relative velocities of $\beta '$ with respect to $\beta $. The curve $\psi $ is a spacelike geodesic orthogonal to the 4-velocity of $\beta $ at $p$, denoted by $u$, and $\lambda $ is a light ray from $q_{\ell}$ to $p$. The vectors $u'_{\mathrm{s}}$ and $u'_{\ell}$ are the 4-velocities of $\beta '$ at $q_{\mathrm{s}}$ and $q_{\ell}$ respectively.} \label{diagram}
\end{figure}

The \textit{kinematic relative velocity of $u'_{\mathrm{s}}$ with respect to $u$} is the vector
\begin{equation}
\label{vkin}
v_{\mathrm{kin}}:=\frac{1}{-g\left( \tau _{q_{\mathrm{s}} p}u'_{\mathrm{s}},u\right) }\tau _{q_{\mathrm{s}} p}u'_{\mathrm{s}}-u.
\end{equation}
Analogously, the \textit{spectroscopic relative velocity of $u'_{\ell}$ with respect to (or observed by) $u$} is the vector
\begin{equation}
\label{vspec}
v_{\mathrm{spec}}:=\frac{1}{-g\left( \tau _{q_{\ell} p}u'_{\ell},u\right) }\tau _{q_{\ell} p}u'_{\ell}-u.
\end{equation}
Note that $v_{\mathrm{kin}}$ and $v_{\mathrm{spec}}$ are both spacelike and orthogonal to $u$.
Varying $p$ along $\beta $, we construct the vector fields $V_{\mathrm{kin}}$ and $V_{\mathrm{spec}}$ defined on $\beta $, representing the \textit{kinematic} and the \textit{spectroscopic relative velocity of $\beta '$ with respect to $\beta $}, respectively (see \cite[Definitions 3, 4, 10 and 11]{Bolos07}).

The \textit{relative position of $q_{\mathrm{s}}$ with respect to $u$} is the vector
\begin{equation}
\label{eqs}
s:=\exp_p^{-1} q_{\mathrm{s}}=\log _p q_{\mathrm{s}},
\end{equation}
where $\exp _p$ is the exponential map\footnote{Given $v\in T_p\mathcal{M}$, $\exp _p v:=\gamma _v(1)$ where $\gamma _v$ is a geodesic starting at $p$ with initial tangent vector $v$.} on $T_p\mathcal{M}$ and $\log _p$ is its inverse (note that the log map is well-defined because we work in a convex normal neighborhood). On the other hand, the \textit{observed relative position of $q_{\ell}$ with respect to (or observed by) $u$} is the projection of $\log _p q_{\ell}$ onto $u^{\bot }$, i.e. it is the vector
\begin{equation}
\label{eqsobs}
s_{\mathrm{obs}}:=\log _p q_{\ell}+g\left( \log _p q_{\ell},u\right) u.
\end{equation}
Note that $s$ and $s_{\mathrm{obs}}$ are both spacelike and orthogonal to $u$. Varying $p$ along $\beta $, we construct the vector fields $S$ and $S_{\mathrm{obs}}$ defined on $\beta $, representing the \textit{relative position} and the \textit{observed relative position of $\beta '$ with respect to $\beta $}, respectively (see \cite[Definitions 1, 2, 8 and 9]{Bolos07}).

The \textit{Fermi relative velocity of $\beta '$ with respect to $\beta $} is the projection of $\nabla
_{U}S$ onto $U^{\bot }$, i.e. it is the vector field
\begin{equation}
\label{vfermi}
V_{\mathrm{Fermi}}:=\nabla _{U}S+g\left( \nabla _{U}S,U\right) U=\nabla _{U}S-g\left( S,\nabla _{U}U\right) U,
\end{equation}
defined on $\beta $ (see \cite[Definition 5 and Proposition 1]{Bolos07}).
Analogously, the \textit{astrometric relative velocity of $\beta '$ with respect to $\beta $} is the projection of $\nabla
_{U}S_{\mathrm{obs}}$ onto $U^{\bot }$, i.e. it is the vector field
\begin{equation}
\label{vast}
V_{\mathrm{ast}}:=\nabla _{U}S_{\mathrm{obs}}+g\left( \nabla _{U}S_{\mathrm{obs}},U\right) U=\nabla _{U}S_{\mathrm{obs}}-g\left( S_{\mathrm{obs}},\nabla _{U}U\right) U,
\end{equation}
defined on $\beta $ (see \cite[Definition 12 and Proposition 5]{Bolos07}). Note that both relative velocities are spacelike and orthogonal to $U$. In order to complete the notation that we are going to use, we define the vectors $v_{\mathrm{Fermi}}:=V_{\mathrm{Fermi}\, p}$ and $v_{\mathrm{ast}}:=V_{\mathrm{ast}\, p}$; moreover, throughout the paper we are going to denote $s:=S_p$, $s_{\mathrm{obs}}:=S_{\mathrm{obs}\, p}$, $v_{\mathrm{kin}}:=V_{\mathrm{kin}\, p}$ and $v_{\mathrm{spec}}:=V_{\mathrm{spec}\, p}$.

\section{Computation of relative velocities}
\label{sec3}

The kinematic relative velocity at $p$ depends only on the 4-velocities of the observer at $p$ and the test particle at $q_{\mathrm{s}}$, i.e. $u$ and $u'_{\mathrm{s}}$ respectively (see Figure \ref{diagram}), and according to \eqref{vkin}, for its computation we have to find the vector $\tau_{q_{\mathrm{s}} p}u'_{\mathrm{s}}$. In practice, we can find a vector field $X$ which is tangent and parallel to the geodesic $\psi$ and then, taking into account that the connection is metric-compatible (i.e. the metric tensor is parallelly transported along geodesics), equations $g\left( X_p,\tau_{q_{\mathrm{s}} p}u'_{\mathrm{s}}\right) =g\left( X_{q_{\mathrm{s}}},u'_{\mathrm{s}}\right) $ and $g\left( \tau_{q_{\mathrm{s}} p}u'_{\mathrm{s}},\tau_{q_{\mathrm{s}} p}u'_{\mathrm{s}}\right) =g\left( u'_{\mathrm{s}},u'_{\mathrm{s}}\right) =-1$ are very useful (see examples in \cite{Bolos07,KC10,BK11}). This is analogous for the spectroscopic relative velocity, taking into account \eqref{vspec} and considering $q_{\ell}$, $u'_{\ell}$, $\lambda$ instead of $q_{\mathrm{s}}$, $u'_{\mathrm{s}}$, $\psi$.

On the other hand, according to \eqref{vfermi}, the Fermi relative velocity at $p$ apparently also depends on $\left( \nabla _U S\right) _p$ and $\left( \nabla _U U\right) _p$ (interpreted as the \textit{acceleration} of the observer at $p$). Consequently, it would not suffice to know the vectors $u$ and $s$ (at $p$), but it would be necessary to know the behavior of the corresponding vector fields, $U$ and $S$, around $p$ (concretely, at the intersection of a neighborhood of $p$ and the observer $\beta $). This is analogous for the astrometric relative velocity, taking into account \eqref{vast} and considering $S_{\mathrm{obs}}$ instead of $S$.

In the following sections, we give some results that weaken this condition and let us to compute the Fermi and astrometric relative velocities knowing the acceleration of the observer and $S$ or $S_{\mathrm{obs}}$ \emph{only at $p$} (i.e. $s$ or $s_{\mathrm{obs}}$).

\subsection{Special relativity}
\label{sec3.1}

In this section, we work in the Minkowski spacetime, considering that all the tangent spaces are canonically identified by means of parallel transport. Moreover, the Minkowski spacetime has an affine structure and, given two events $p,q\in \mathcal{M}$, the vector which joins $p$ and $q$ (i.e. $\log _p q$) is given by $q-p$.

The goal is to find expressions for $V_{\mathrm{Fermi}}$ and $V_{\mathrm{ast}}$ in terms of $U$, $\nabla _U U$, $U'$, $S$ and $S_{\mathrm{obs}}$, avoiding $\nabla _U S$, $\nabla _U S_{\mathrm{obs}}$, or any term involving the evolution of $S$ and $S_{\mathrm{obs}}$ around $p$.
These expressions were previously found in the proofs of \cite[Propositions 8 and 9]{Bolos07} (which were given in terms of $V_{\mathrm{kin}}$ and $V_{\mathrm{spec}}$), but we present here new proofs in a more geometric way that are susceptible to be extended to general relativity and serve us to prove the main result of this work: Proposition \ref{propgen1}.

\begin{proposition}
\label{propmink1} It holds
\begin{equation}
\label{fmink1_3}
V_{\mathrm{Fermi}} =\left( 1+g\left( S,\nabla _{U}U\right)
\right) \left( \frac{1}{-g\left( U',U\right)
}U'-U\right) ,
\end{equation}
where $V_{\mathrm{Fermi}}$, $U$, $S$, $\nabla _U U$ are evaluated at an event $p$ of $\beta $, and $U'$ is evaluated at the corresponding event $q_{\mathrm{s}}$ of $\beta '$ (see Figure \ref{diagram}).
\end{proposition}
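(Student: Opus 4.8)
The plan is to work entirely in Minkowski spacetime, where parallel transport is trivial and $\log_p q = q - p$, and to compute $\nabla_U S$ directly from the defining geometric condition for $q_{\mathrm{s}}$. First I would parametrize the observer $\beta$ by its proper time $t$, so that $U = \beta'(t)$ and $\nabla_U U = \beta''(t)$, and for each $p = \beta(t)$ let $q_{\mathrm{s}}(t) = \beta'(\sigma(t))$ be the event on the test particle's worldline meeting the instantaneous simultaneity condition, i.e.\ $g\bigl(q_{\mathrm{s}}(t) - \beta(t),\, U_{\beta(t)}\bigr) = 0$. Then $S = S(t) = q_{\mathrm{s}}(t) - \beta(t)$ as a vector field along $\beta$, and differentiating gives $\nabla_U S = \dot\sigma\, U'_{q_{\mathrm{s}}} - U$, where $U'_{q_{\mathrm{s}}}$ is the test particle's $4$-velocity at $q_{\mathrm{s}}$ and $\dot\sigma = \mathrm{d}\sigma/\mathrm{d}t$.

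The next step is to pin down $\dot\sigma$ by differentiating the orthogonality constraint $g(S, U) = 0$ along $\beta$. This yields $g(\nabla_U S, U) + g(S, \nabla_U U) = 0$, hence $g(\nabla_U S, U) = -g(S, \nabla_U U)$. Substituting the expression $\nabla_U S = \dot\sigma\, U'_{q_{\mathrm{s}}} - U$ into this and using $g(U,U) = -1$ gives $\dot\sigma\, g(U'_{q_{\mathrm{s}}}, U) + 1 = -g(S, \nabla_U U)$, so $\dot\sigma = \dfrac{1 + g(S, \nabla_U U)}{-g(U'_{q_{\mathrm{s}}}, U)}$. Now I substitute this value of $\dot\sigma$ back into $\nabla_U S = \dot\sigma\, U'_{q_{\mathrm{s}}} - U$ and then project onto $U^\perp$ using formula \eqref{vfermi}: $V_{\mathrm{Fermi}} = \nabla_U S + g(\nabla_U S, U)\, U = \dot\sigma\, U'_{q_{\mathrm{s}}} - U - \bigl(1 + g(S,\nabla_U U)\bigr) U$. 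Rearranging, and recalling that $\bigl(1 + g(S,\nabla_U U)\bigr)/(-g(U',U)) = \dot\sigma$, the right-hand side collapses exactly to $\bigl(1 + g(S, \nabla_U U)\bigr)\left(\dfrac{1}{-g(U',U)} U' - U\right)$, which is \eqref{fmink1_3}.

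The main obstacle is conceptual rather than computational: one must be careful that the function $\sigma(t)$ defining the "instantaneously simultaneous" event $q_{\mathrm{s}}(t)$ is well-defined and differentiable, and that $S$ really is the vector field obtained by varying $p$ along $\beta$ as stipulated in the definitions — in particular that the derivative of the constraint identity is legitimate. The implicit function theorem applied to $g(\beta'(\sigma) - \beta(t), U_{\beta(t)}) = 0$ handles this, with the nondegeneracy guaranteed by $g(U'_{q_{\mathrm{s}}}, U) \neq 0$ (both being timelike). Everything else is elementary linear algebra in the flat metric; the only subtlety is bookkeeping between the parameter derivative $\dot\sigma$ and the covariant derivative $\nabla_U$, which in Minkowski coordinates are literally ordinary derivatives of the affine position vector. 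The proof is structured so that the key identity $g(\nabla_U S, U) = -g(S, \nabla_U U)$ — which is precisely what makes the projection term in \eqref{vfermi} interact cleanly with $\dot\sigma$ — is isolated and can later be sought in the curved setting, as the paper's concluding discussion anticipates.
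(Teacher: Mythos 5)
Your proposal is correct and follows essentially the same route as the paper's own proof: write $S(\tau)=q_{\mathrm{s}}(\tau)-\beta(\tau)$ in the affine structure, differentiate, use the constraint $g(S,U)=0$ to solve for the reparametrization rate $\dot\sigma$, and substitute into the definition \eqref{vfermi}. The only blemish is a transcription slip in your intermediate display, where the correct $\dot\sigma\, U' - U - g(S,\nabla_U U)\,U$ appears as $\dot\sigma\, U' - U - \bigl(1+g(S,\nabla_U U)\bigr)U$; your value of $\dot\sigma$ and your final expression are nevertheless right, so this does not affect the argument.
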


\begin{proof}
Let $p=\beta \left( \tau \right) $ be an event of
$\beta $ (where $\tau $ is the proper time of $\beta $ at $p$), and let $u\left( \tau \right) $ be the 4-velocity of $\beta
$ at $p$. Considering the corresponding event $q_{\mathrm{s}}$ of $\beta '$, the relative position of $q_{\mathrm{s}}$ with
respect to $u(\tau )$, denoted by $s(\tau )$, is given by $q_{\mathrm{s}}-p$. So, if $\tau '\left( \tau \right) $ is the proper time of $\beta '$ at $q_{\mathrm{s}}$ and $u'_{\mathrm{s}}=u'\left( \tau '(\tau )\right) $ is the 4-velocity of $\beta '$ at $q_{\mathrm{s}}$, by \eqref{eqs} we have
\begin{equation}
\label{eqm1}
s\left( \tau \right) =q_{\mathrm{s}}-p=\beta' \left( \tau '\left( \tau \right) \right)-\beta \left( \tau \right) \,\Longrightarrow \, \dot{s}=u'_{\mathrm{s}} \,\dot{\tau }'-u,
\end{equation}
where the overdot represents differentiation with respect to $\tau $.
On the other hand
\begin{equation}
\label{eqm2}
g\left( s,u\right) =0 \,\Longrightarrow \, g\left( \dot{s},u\right) +g\left( s,\dot{u}\right) =0.
\end{equation}
Applying \eqref{eqm1} in \eqref{eqm2} we have
\begin{equation}
\label{eqm3}
g\left( u'_{\mathrm{s}} \,\dot{\tau }'-u,u\right) +g\left( s,\dot{u}\right) =0 \,\Longrightarrow \, \dot{\tau }'=\frac{1+g\left( s,\dot{u}\right) }{-g\left( u'_{\mathrm{s}} ,u\right) },
\end{equation}
and then, combining \eqref{eqm1} with \eqref{eqm3}, we obtain
\begin{equation}
\label{fmink1}
\dot{s}=\frac{1+g\left( s,\dot{u}\right) }{-g\left( u'_{\mathrm{s}} ,u\right) }u'_{\mathrm{s}} -u.
\end{equation}

Using vector fields, from \eqref{fmink1} we have
\begin{equation}
\label{fmink1_2}
\nabla _{U}S=\frac{1+g\left( S,\nabla _{U}U\right) }{-g\left( U',U\right) }U'-U,
\end{equation}
where $U$, $S$, $\nabla _U U$, $\nabla _U S$ are evaluated at $p$, and $U'$ is evaluated at the corresponding event $q_{\mathrm{s}}$.
So, applying \eqref{fmink1_2} in \eqref{vfermi},
the Fermi relative velocity $V_{\mathrm{Fermi}}$ of $\beta
'$ with respect to $\beta $ is given by \eqref{fmink1_3}.
\end{proof}

Next, we give an analogous result for the astrometric relative velocity, in the case that the observer and the test particle do not intersect. Otherwise, we have to take into account Remark \ref{rem31}.

\begin{proposition}
\label{propmink2}
If $S_{\mathrm{obs}}$ does not vanish, then
\begin{equation}
\label{fmink2_3}
V_{\mathrm{ast}} = \frac{-g\left( U',U\right) }{g\left( U',\frac{S_{\mathrm{obs}}}{\| S_{\mathrm{obs}}\| }-U\right) }\left( \frac{1}{-g\left( U',U\right)
}U'-U\right) +\| S_{\mathrm{obs}}\| \nabla _U U,
\end{equation}
where $V_{\mathrm{ast}}$, $U$, $S_{\mathrm{obs}}$, $\nabla _U U$ are evaluated at an event $p$ of $\beta $, and $U'$ is evaluated at the corresponding event $q_{\ell}$ of $\beta '$ (see Figure \ref{diagram}).
\end{proposition}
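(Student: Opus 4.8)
The plan is to run the argument of Proposition~\ref{propmink1} again, but along the light ray $\lambda$ in place of the spacelike geodesic $\psi$, while bookkeeping the extra projection by which $s_{\mathrm{obs}}$ differs from $\log_p q_{\ell}$. Write $\ell:=\log_p q_{\ell}=q_{\ell}-p$. The first thing I would do is relate $\ell$ to $s_{\mathrm{obs}}$: by \eqref{eqsobs} we have $s_{\mathrm{obs}}=\ell+g(\ell,u)u$, hence $\ell=s_{\mathrm{obs}}-g(\ell,u)u$; imposing $g(\ell,\ell)=0$ and using $g(s_{\mathrm{obs}},u)=0$ and $g(u,u)=-1$ gives $g(\ell,u)^{2}=\|s_{\mathrm{obs}}\|^{2}$, and since $\ell$ is past-pointing and $u$ future-pointing, $g(\ell,u)=\|s_{\mathrm{obs}}\|>0$. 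Thus $\ell=s_{\mathrm{obs}}-\|s_{\mathrm{obs}}\|\,u$ at every event of $\beta$.

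Next, as in Proposition~\ref{propmink1}, I would set $p=\beta(\tau)$ and $q_{\ell}=\beta'(\tau'(\tau))$ with $\tau'$ determined by the light-ray condition, so that $\ell(\tau)=\beta'(\tau'(\tau))-\beta(\tau)$ and, differentiating with respect to $\tau$ (overdot), $\dot\ell=u'_{\ell}\,\dot\tau'-u$. To get $\dot\tau'$ I would differentiate the lightlike condition $g(\ell,\ell)=0$, valid for all $\tau$, to obtain $g(\dot\ell,\ell)=0$; substituting $\dot\ell=u'_{\ell}\dot\tau'-u$ together with $\ell=s_{\mathrm{obs}}-\|s_{\mathrm{obs}}\|u$ and $g(u,\ell)=\|s_{\mathrm{obs}}\|$, this gives $\dot\tau'=1/\,g\bigl(u'_{\ell},\tfrac{s_{\mathrm{obs}}}{\|s_{\mathrm{obs}}\|}-u\bigr)$.

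Finally I would differentiate $s_{\mathrm{obs}}=\ell+\|s_{\mathrm{obs}}\|u$, which gives $\dot s_{\mathrm{obs}}=\dot\ell+\bigl(\tfrac{d}{d\tau}\|s_{\mathrm{obs}}\|\bigr)u+\|s_{\mathrm{obs}}\|\dot u$, and project onto $u^{\bot}$ to recover $v_{\mathrm{ast}}$ through \eqref{vast}. The middle term is proportional to $u$, hence killed by the projection; the term $\|s_{\mathrm{obs}}\|\dot u$ is already orthogonal to $u$ and passes through unchanged; and the projection of $\dot\ell=u'_{\ell}\dot\tau'-u$ collapses to $\dot\tau'\bigl(u'_{\ell}+g(u'_{\ell},u)u\bigr)$. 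Using $u'_{\ell}+g(u'_{\ell},u)u=-g(u'_{\ell},u)\bigl(\tfrac{1}{-g(u'_{\ell},u)}u'_{\ell}-u\bigr)$, substituting the expression for $\dot\tau'$, and passing to vector fields ($u\to U$, $u'_{\ell}\to U'$ at $q_{\ell}$, $\dot u\to\nabla_{U}U$) yields \eqref{fmink2_3}.

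I expect the only delicate points to be the first step and, above all, the observation that the contribution $\tfrac{d}{d\tau}\|s_{\mathrm{obs}}\|$ --- precisely the sort of ``evolution of $S_{\mathrm{obs}}$'' term one wants to eliminate --- is annihilated by the projection onto $u^{\bot}$; this is the structural reason the final formula contains no $\nabla_{U}S_{\mathrm{obs}}$. The rest is routine algebra mirroring Proposition~\ref{propmink1}. The hypothesis $S_{\mathrm{obs}}\neq 0$ is needed only to divide by $\|s_{\mathrm{obs}}\|$; if $\beta$ and $\beta'$ meet this breaks down and has to be dealt with separately, which is what Remark~\ref{rem31} is for.
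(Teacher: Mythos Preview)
Your proof is correct and follows the same overall strategy as the paper's: write $q_{\ell}=\beta'(\tau'(\tau))$, differentiate, solve for $\dot\tau'$, and project onto $u^{\bot}$. The one genuine difference lies in how $\dot\tau'$ is extracted. The paper works directly with $s_{\mathrm{obs}}$: it differentiates $s_{\mathrm{obs}}=q_{\ell}-p+\|s_{\mathrm{obs}}\|u$, then differentiates the orthogonality relation $g(s_{\mathrm{obs}},u)=0$ and, after an intermediate step computing $g(\dot s_{\mathrm{obs}},s_{\mathrm{obs}}/\|s_{\mathrm{obs}}\|)$, solves for $\dot\tau'$. You instead keep the null vector $\ell=q_{\ell}-p$ as the primary object and differentiate the lightlike condition $g(\ell,\ell)=0$, which yields $\dot\tau'$ in a single step. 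Your route is a bit more economical and makes the role of the light ray transparent; the paper's route hews more closely to the template of Proposition~\ref{propmink1} (differentiating the orthogonality of the relative position to $U$), which is natural since the authors' stated aim is to present proofs that generalize to the curved setting of Proposition~\ref{propgen1}. Both routes produce the same $\dot\tau'$ and the same final formula, and your observation that the $\tfrac{d}{d\tau}\|s_{\mathrm{obs}}\|$ term is killed by the projection is exactly the mechanism at work in the paper as well.
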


\begin{proof}
Let $p=\beta \left( \tau \right) $ be an event of
$\beta $ (where $\tau $ is the proper time of $\beta $ at $p$), and let $u\left( \tau \right) $ be the 4-velocity of $\beta
$ at $p$. Considering the corresponding event $q_{\mathrm{\ell}}$ of $\beta '$, the relative position of $q_{\ell}$ observed by $u(\tau )$, denoted by $s_{\textrm{obs}}(\tau )$, is the projection of $q_{\ell}-p$ onto $u^{\bot }(\tau )$. If $\tau '\left( \tau \right) $ is the proper time of $\beta '$ at $q_{\ell}$, by \eqref{eqsobs} we have
\begin{equation}
\label{eqm21}
s_{\textrm{obs}}\left( \tau \right) =q_{\ell}-p+g\left( q_{\ell}-p,u(\tau )\right) u(\tau )=\beta' \left( \tau '\left( \tau \right) \right)-\beta \left( \tau \right) +\| s_{\textrm{obs}}\left( \tau\right) \| u(\tau ),
\end{equation}
where $\| s_{\textrm{obs}}(\tau )\|$ is the affine distance from $p$ to $q_{\ell}$ observed by $u(\tau )$ (see \cite[Definition 13]{Bolos07}).
If $u'_{\ell}=u'\left( \tau '(\tau )\right) $ is the 4-velocity of $\beta '$ at $q_{\ell}$, from \eqref{eqm21} we obtain
\begin{equation}
\label{eqm22}
\dot{s}_{\textrm{obs}}=u'_{\ell}\, \dot{\tau }'-u+g\left( \dot{s}_{\textrm{obs}},\frac{s_{\textrm{obs}}}{\| s_{\textrm{obs}}\| }\right) u+\| s_{\textrm{obs}}\| \dot{u},
\end{equation}
where the overdot represents differentiation with respect to $\tau $. Taking into account that $g\left( s_{\textrm{obs}},u\right) =0$ and \eqref{eqm22}, we have
\begin{equation}
\label{eqm23}
g\left( \dot{s}_{\textrm{obs}},\frac{s_{\textrm{obs}}}{\| s_{\textrm{obs}}\| }\right) =g\left( u'_{\ell}\, \dot{\tau }'+\| s_{\textrm{obs}}\| \dot{u},\frac{s_{\textrm{obs}}}{\| s_{\textrm{obs}}\| }\right) =\dot{\tau }'g\left( u'_{\ell} ,\frac{s_{\textrm{obs}}}{\| s_{\textrm{obs}}\| }\right) +g\left( \dot{u},s_{\textrm{obs}}\right) ,
\end{equation}
and hence, by \eqref{eqm22} and \eqref{eqm23} we obtain
\begin{equation}
\label{eqm24}
\dot{s}_{\textrm{obs}}=u'_{\ell}\, \dot{\tau }'+\left( \dot{\tau }'g\left( u'_{\ell} ,\frac{s_{\textrm{obs}}}{\| s_{\textrm{obs}}\| }\right) +g\left( \dot{u},s_{\textrm{obs}}\right) -1\right) u+\| s_{\textrm{obs}}\| \dot{u} .
\end{equation}
On the other hand
\begin{equation}
\label{eqm25}
g\left( s_{\textrm{obs}},u\right) =0 \,\Longrightarrow \, g\left( \dot{s}_{\textrm{obs}},u\right) +g\left( s_{\textrm{obs}},\dot{u}\right) =0.
\end{equation}
Applying \eqref{eqm24} in \eqref{eqm25}, and taking into account that $g\left( \dot{u},u\right) =0$, we find
\begin{equation}
\label{eqm26}
\dot{\tau }'=\frac{1}{g\left( u'_{\ell} ,\frac{s_{\textrm{obs}}}{\| s_{\textrm{obs}}\| }-u\right) },
\end{equation}
and then, combining \eqref{eqm24} with \eqref{eqm26}, we obtain
\begin{equation}
\label{fmink2}
\dot{s}_{\textrm{obs}} = \frac{1}{g\left( u'_{\ell} ,\frac{s_{\textrm{obs}}}{\| s_{\textrm{obs}}\| }-u\right) }\left( u'_{\ell} +g\left( u'_{\ell} ,u\right) u\right) +g\left( s_{\textrm{obs}},\dot{u}\right) u+\| s_{\textrm{obs}}\| \dot{u} .
\end{equation}

Using vector fields, from \eqref{fmink2} we have
\begin{equation}
\label{fmink2_2}
\nabla _U S_{\textrm{obs}}=\frac{1}{g\left( U',\frac{S_{\textrm{obs}}}{\| S_{\textrm{obs}}\| }-U\right) }\left( U'+g\left( U',U\right) U\right) +g\left( S_{\textrm{obs}},\nabla _U U\right) U+\| S_{\textrm{obs}}\| \nabla _U U,
\end{equation}
where $U$, $S_{\textrm{obs}}$, $\nabla _U U$, $\nabla _U S_{\textrm{obs}}$ are evaluated at $p$, and $U'$ is evaluated at the corresponding event $q_{\ell}$.
So, applying \eqref{fmink2_2} in \eqref{vast},
the astrometric relative velocity $V_{\mathrm{ast}}$ of $\beta
'$ with respect to $\beta $ is given by \eqref{fmink2_3}.
\end{proof}

\begin{remark}
\label{rem31}
If $S_{\mathrm{obs}}$ vanishes at $p$ (i.e. $s_{\mathrm{obs}}=0$) then $\beta $ and $\beta '$ intersect at $p$. In this case, if $u$ and $u'$ are the 4-velocities of $\beta $ and $\beta '$ at $p$ respectively, it is easy to prove that $v_{\mathrm{ast}}=\frac{1}{1\pm \| v\| }v$, where $v:=\frac{1}{-g\left( u',u\right) }u'-u$ is the usual relative velocity of $u'$ observed by $u$ (that coincides with $v_{\mathrm{kin}}$, $v_{\mathrm{Fermi}}$ and $v_{\mathrm{spec}}$, see \cite{Bolos07}). The ``$+$'' or ``$-$'' sign is taken for a ``leaving'' or ``arriving'' test particle respectively.
\end{remark}


\subsection{General relativity}
\label{sec3.2}

As consequence of Propositions \ref{propmink1} and \ref{propmink2} we state that, in special relativity, $v_{\mathrm{Fermi}}$ and $v_{\mathrm{ast}}$ can be computed in terms of $p$, $q_{\mathrm{s}}$, $q_{\ell }$, $u$, $\left( \nabla _U U\right) _p$, $u'_{\mathrm{s}}$, $u'_{\ell}$, $s$ and $s_{\mathrm{obs}}$. Hence, we do not need to know $S$ or $S_{\mathrm{obs}}$ around $p$ for computing these relative velocities. Next, we are going to generalize this result.

\begin{proposition}
\label{propgen1} In general relativity,
\begin{itemize}
\item $v_{\mathrm{Fermi}}$ is completely determined by $p$, $q_{\mathrm{s}}$, $u$, $\left( \nabla _U U\right) _p$, $u'_{\mathrm{s}}$ and $s$.
\item $v_{\mathrm{ast}}$ is completely determined by $p$, $q_{\ell }$, $u$, $\left( \nabla _U U\right) _p$, $u'_{\ell}$ and $s_{\mathrm{obs}}$.
\end{itemize}
\end{proposition}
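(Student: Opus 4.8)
The plan is to carry out the computations of Propositions~\ref{propmink1} and~\ref{propmink2} in the curved setting, replacing the affine identity $s(\tau)=q_{\mathrm s}-p$ (and the differentiations based on it) by the calculus of Jacobi fields along the connecting geodesics. Fix an event $p=\beta(\tau_0)$, put $u=U_p$, and for $\tau$ near $\tau_0$ let $q_{\mathrm s}(\tau)=\beta'(\tau'(\tau))$ be the event of $\beta'$ selected by the orthogonality condition, with $s(\tau)=S_{\beta(\tau)}=\log_{\beta(\tau)}q_{\mathrm s}(\tau)$; the smoothness of $\tau\mapsto\tau'(\tau)$ is part of the construction of $S$ in~\cite{Bolos07}. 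I would then consider the two-parameter map $\psi(\tau,t):=\exp_{\beta(\tau)}\big(t\,s(\tau)\big)$, whose $t$-curves are geodesics with $\psi(\tau,0)=\beta(\tau)$, $\psi(\tau,1)=q_{\mathrm s}(\tau)$ and $\partial_t\psi(\tau,0)=s(\tau)$. Write $\psi_0:=\psi(\tau_0,\cdot)$, the spacelike geodesic from $p$ to $q_{\mathrm s}$; it, and the curvature of $(\mathcal M,g)$ along it, are completely determined by the pair $(p,q_{\mathrm s})$ (equivalently by $(p,s)$).

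The first key step is the identity $(\nabla_U S)_p=(\nabla J)(0)$, obtained from the torsion-freeness of $\nabla$ (i.e. $\tfrac{D}{d\tau}\partial_t\psi=\tfrac{D}{dt}\partial_\tau\psi$), where $J(t):=\partial_\tau\psi(\tau_0,t)$ is the variation field at $\tau=\tau_0$ of the geodesic variation $\{\psi(\tau,\cdot)\}$ and hence a Jacobi field along $\psi_0$. Its endpoint values are purely first-order data at $p$ and $q_{\mathrm s}$: $J(0)=\dot\beta(\tau_0)=u$, and, since $\psi(\tau,1)=\beta'(\tau'(\tau))$, $J(1)=\dot\tau'(\tau_0)\,u'_{\mathrm s}$. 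Here is where the hypothesis enters: since $\mathcal M$ is a convex normal neighborhood, $p$ and $q_{\mathrm s}$ are not conjugate along $\psi_0$, so the linear map sending a Jacobi field along $\psi_0$ to the pair of its endpoint values is an isomorphism; thus $J$ is the unique Jacobi field with those endpoints, and $(\nabla J)(0)=(\nabla_U S)_p$ is an affine function of the single scalar $\dot\tau'(\tau_0)$, say $(\nabla_U S)_p=a+\dot\tau'(\tau_0)\,b$, with $a,b\in T_p\mathcal M$ determined only by $\psi_0$, $u$ and $u'_{\mathrm s}$.

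It remains to recover $\dot\tau'(\tau_0)$ from local data. Differentiating $g(S,U)=0$ along $\beta$ gives $g\big((\nabla_U S)_p,u\big)+g\big(s,(\nabla_U U)_p\big)=0$, i.e. $g(a,u)+\dot\tau'(\tau_0)\,g(b,u)=-g\big(s,(\nabla_U U)_p\big)$. The coefficient $g(b,u)$ is, up to a nonzero factor, the $\sigma$-derivative of $g\big(\log_p\beta'(\sigma),u\big)$ at $\sigma=\tau'(\tau_0)$, whose non-vanishing is precisely the transversality that makes $q_{\mathrm s}(\tau)$, hence $\tau'(\tau)$, well defined and smooth (cf.~\cite{Bolos07}), and which reduces to the quantity $-g(u'_{\mathrm s},u)\ne0$ of \eqref{eqm3} in the Minkowski case. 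Solving this linear equation yields $\dot\tau'(\tau_0)$, hence $(\nabla_U S)_p$, in terms of $p$, $q_{\mathrm s}$, $u$, $u'_{\mathrm s}$ and $(\nabla_U U)_p$; substituting into $v_{\mathrm{Fermi}}=(\nabla_U S)_p-g\big(s,(\nabla_U U)_p\big)u$ from \eqref{vfermi} proves the first bullet, and should reproduce \eqref{fmink1_3} when the spacetime is flat. For the second bullet I would repeat the argument with $\psi_0$ replaced by the null geodesic $\lambda_0(t)=\exp_p(tw)$, $w:=\log_p q_\ell$, and with $S_{\mathrm{obs}}$ written, via \eqref{eqsobs}, as $S_{\mathrm{obs}}=W+g(W,U)U$ with $W(\tau):=\log_{\beta(\tau)}q_\ell(\tau)$: then $(\nabla_U W)_p=(\nabla J)(0)$ for the Jacobi field $J$ along $\lambda_0$ with $J(0)=u$, $J(1)=\dot\tau'(\tau_0)u'_\ell$; non-conjugacy along $\lambda_0$ makes $(\nabla_U W)_p$ affine in $\dot\tau'(\tau_0)$; this scalar is now fixed by differentiating the null condition $g(W,W)=0$, which gives $g\big((\nabla_U W)_p,w\big)=0$ (the analogue of \eqref{eqm26}); and expanding $(\nabla_U S_{\mathrm{obs}})_p$ and using \eqref{vast} shows that $v_{\mathrm{ast}}$ is completely determined by $p$, $q_\ell$, $u$, $(\nabla_U U)_p$, $u'_\ell$ and $s_{\mathrm{obs}}$ (with $w$, and hence $q_\ell$, recovered from $p$, $u$ and $s_{\mathrm{obs}}$).

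I expect the main obstacle to be precisely the conceptual step in the second paragraph: turning the \emph{a priori} dependence of $(\nabla_U S)_p$ on the behavior of $S$ --- and therefore of $\beta'$ --- in a whole neighborhood of $p$ into a dependence on a single Jacobi field along $\psi_0$ whose endpoint data involve only first-order quantities at $p$ and $q_{\mathrm s}$; the point being that non-conjugacy, guaranteed by the convex normal neighborhood, is what makes those two endpoint vectors a complete description of the Jacobi field. A secondary technical point is the non-degeneracy of the linear equation that determines $\dot\tau'(\tau_0)$; I expect this to follow from the well-definedness and smoothness of the simultaneity and light-cone maps already established in~\cite{Bolos07}, or else by continuity from the explicit Minkowski formulas \eqref{eqm3} and \eqref{eqm26}.
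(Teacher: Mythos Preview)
Your proposal is correct and follows the same logical skeleton as the paper's proof: write $(\nabla_US)_p$ as an affine function of the single scalar $\dot\tau'(\tau_0)$ with coefficients depending only on $p$, $q_{\mathrm s}$, $u$, $u'_{\mathrm s}$; then eliminate $\dot\tau'(\tau_0)$ using the differentiated constraint $g(S,U)=0$; and analogously for the astrometric case. The difference is in how that first step is executed. The paper fixes a local coordinate system containing $p$ and $q_{\mathrm s}$, writes $(\nabla_US)_p=\dot s+u^js^k\Gamma^i_{jk}(p)\,\partial_i|_p$, and expands $\dot s$ by the chain rule applied to $\log(\,\underline{\ \ }\,,q_{\mathrm s})$ and $\log(p,\,\underline{\ \ }\,)$, simply noting that the partial derivatives of these two functions are determined by the coordinates of $p$ and $q_{\mathrm s}$ (and remarking that Jacobi-field theory would compute them if needed). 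You instead make Jacobi fields the organizing principle: the identity $(\nabla_US)_p=\tfrac{DJ}{dt}(0)$ together with non-conjugacy of $p$ and $q_{\mathrm s}$ along $\psi_0$ shows directly, and intrinsically, that $J$---hence $(\nabla_US)_p$---is determined by the boundary data $J(0)=u$, $J(1)=\dot\tau'(\tau_0)u'_{\mathrm s}$ and the geometry along $\psi_0$.

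What each buys: the paper's coordinate argument is elementary and sidesteps any explicit appeal to the Jacobi equation, at the cost of being non-intrinsic. Your argument is coordinate-free and makes transparent \emph{why} the convex-normal-neighborhood hypothesis matters (it rules out conjugate points, so endpoint values determine the Jacobi field); this is in fact a step toward the open problem the paper poses at the end of Section~\ref{sec3.2}. Your treatment of the non-degeneracy $g(b,u)\neq 0$ is also slightly more informative than the paper's (which simply asserts one can solve for $\dot\tau'$): you correctly identify $b=\tfrac{DJ_b}{dt}(0)$ with $d(\exp_p^{-1})_{q_{\mathrm s}}(u'_{\mathrm s})$, so that $g(b,u)=\tfrac{d}{d\sigma}g(\log_p\beta'(\sigma),u)$, whose non-vanishing is exactly the transversality underlying the smooth definition of $q_{\mathrm s}(\tau)$.
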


\begin{proof}
First, we are going to deal with the Fermi relative velocity, generalizing the steps of the proof of Proposition \ref{propmink1}. Let $p=\beta \left( \tau \right) $ be an event of
$\beta $ (where $\tau $ is the proper time of $\beta $ at $p$), and let $u\left( \tau \right) $ be the 4-velocity of $\beta
$ at $p$. From \eqref{eqs}, the relative position of the corresponding $q_{\mathrm{s}}$ with respect to $u(\tau )$ is given by
\begin{equation}
\label{eqgen1}
s(\tau )=\log \left( p,q_{\mathrm{s}}\right) =\exp ^{-1}_p q_{\mathrm{s}}=\exp ^{-1}_{\beta (\tau)} \beta '\left( \tau '(\tau)\right) ,
\end{equation}
where $\tau '\left( \tau \right) $ is the proper time of $\beta '$ at $q_{\mathrm{s}}$ and we use the notation $\log \left( p,q_{\mathrm{s}}\right) :=\log _p q_{\mathrm{s}}$ for convenience.
Since we work in a convex normal neighborhood, we can use a local coordinate system $\left( x^0,x^1,x^2,x^3\right) $ containing $p$ and $q_{\mathrm{s}}$, and hence
\begin{equation}
\label{eqgen2}
\left( \nabla _U S\right) _p=\dot{s}+u^j s^k \Gamma ^i_{jk}(p)\left. \frac{\partial }{\partial x^i}\right| _p,
\end{equation}
where the overdot represents differentiation with respect to $\tau $.
Taking into account \eqref{eqgen1} and the fact that $u'_{\mathrm{s}}=u'\left( \tau '(\tau )\right) $ is the 4-velocity of $\beta '$ at $q_{\mathrm{s}}$, by the chain rule we have
\begin{equation}
\label{eqgen3}
\dot{s}=\left( \left. \frac{\partial f_1^i}{\partial x^j}\right| _p u^j+\left. \frac{\partial f_2^i}{\partial x^j}\right| _{q_{\mathrm{s}}} {u'_{\mathrm{s}}}^j \dot{\tau}'\right) \left. \frac{\partial }{\partial x^i}\right| _p,
\end{equation}
where $f_1:=\log \left( \underline{\,\,\,\,} ,q_{\mathrm{s}}\right) $ and $f_2:=\log \left( p,\underline{\,\,\,\,} \right) =\log _p=\exp _p^{-1}$. Note that the derivatives of these functions are completely determined by the coordinates of $p$ and $q_{\mathrm{s}}$ and so, for our purposes, we do not need to compute them. Nevertheless, they can be computed by means of the Jacobi fields theory (see, for example, \cite{Cheeger75,doCarmo92}).

On the other hand
\begin{equation}
\label{eqgen4}
g\left( S,U\right) =0 \,\Longrightarrow \, g\left( \nabla _U S,U\right) +g\left( S,\nabla _U U\right) =0.
\end{equation}
Then, applying \eqref{eqgen2} and \eqref{eqgen3} in \eqref{eqgen4}, we can solve $\dot{\tau}'$ in terms of the coordinates of $p$, $q_{\mathrm{s}}$, $u$,  $\left( \nabla _U U\right) _p$, $u'_{\mathrm{s}}$ and $s$. Hence, taking this into account jointly with \eqref{eqgen2}, \eqref{eqgen3} and the expression of the Fermi relative velocity given in \eqref{vfermi}, the result holds.

With respect to the astrometric relative velocity, it can be proved analogously, taking into account \eqref{eqsobs} and generalizing the steps of the proof of Proposition \ref{propmink2}.
\end{proof}

There is an open problem that consists on finding intrinsic expressions (in a coordinate-free language) for the Fermi and astrometric relative velocities, analogous to those given in Propositions \ref{propmink1} and \ref{propmink2}, i.e. in terms of $U$, $\nabla _U U$, $U'$, $S$ and $S_{\mathrm{obs}}$, avoiding $\nabla _U S$, $\nabla _U S_{\mathrm{obs}}$, or any term involving the evolution of $S$ and $S_{\mathrm{obs}}$ around $p$. It is a hard geometric problem, but it would be very useful for the interpretation and computation of these relative velocities.

\section{Concluding remarks}
\label{sec4}

\begin{figure}[tbp]
\begin{center}
\includegraphics[width=0.7\textwidth]{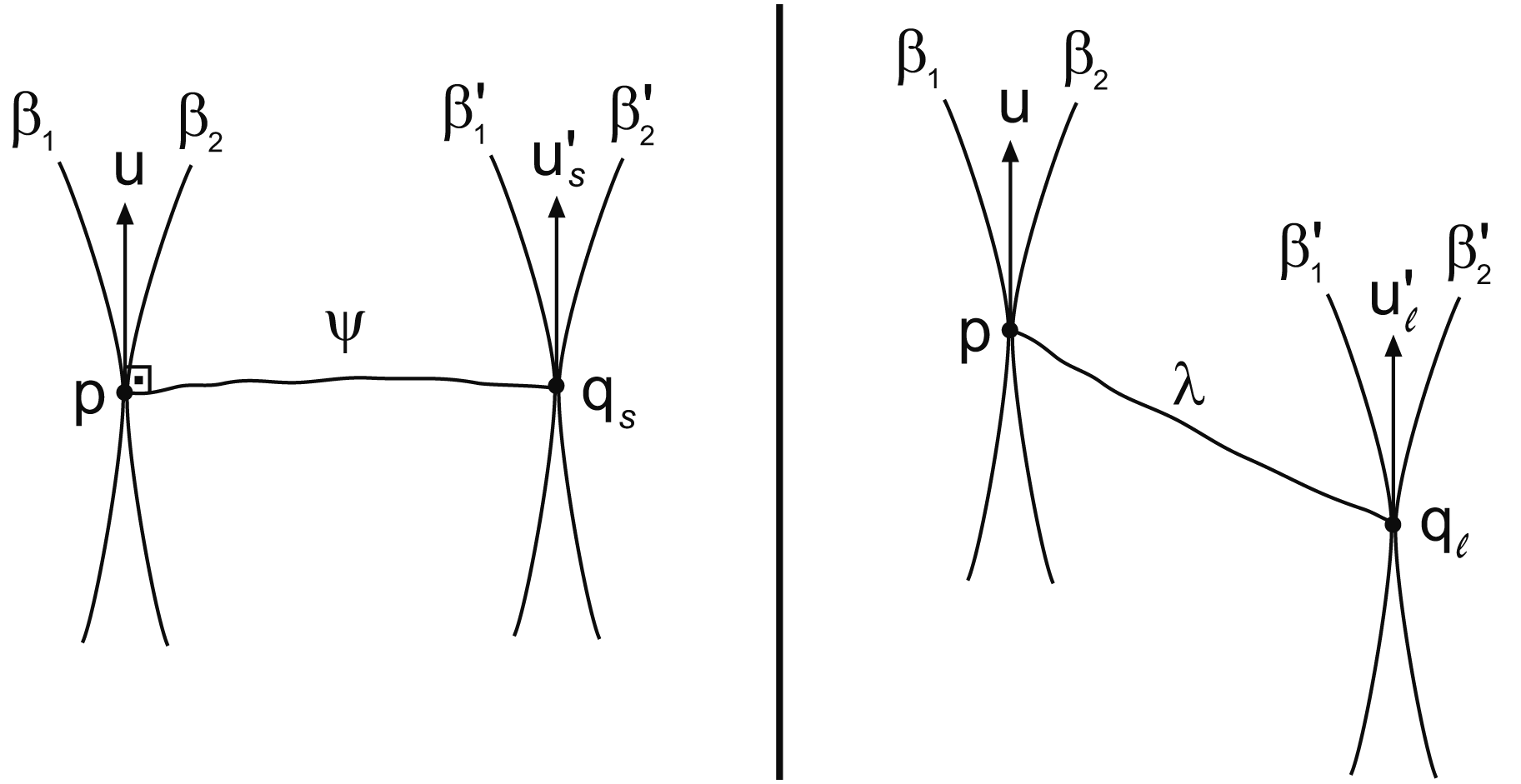}
\end{center}
\caption{The observers $\beta _1$ and $\beta _2$ have the same 4-velocity $u$ at $p$. In the same way, the test particles $\beta '_1$ and $\beta '_2$ have the same 4-velocity at $q_{\mathrm{s}}$ (left) or $q_{\ell}$ (right). In these diagrams, $\psi$ is a spacelike geodesic orthogonal to $u$ at $p$, and $\lambda $ is a light ray from $q_{\ell}$ to $p$.}
\label{diagram2}
\end{figure}

Taking into account \eqref{vkin} and \eqref{vspec}, if we study the kinematic and spectroscopic relative velocities of the test particles with respect to the observers represented in the Figure \ref{diagram2}, we have
\begin{equation*}
\begin{array}{lclclcl}
v_{\mathrm{kin}\,1,1} & = & v_{\mathrm{kin}\,1,2} & = & v_{\mathrm{kin}\,2,1} & = & v_{\mathrm{kin}\,2,2} \, ,\\
v_{\mathrm{spec}\,1,1} & = & v_{\mathrm{spec}\,1,2} & = & v_{\mathrm{spec}\,2,1} & = & v_{\mathrm{spec}\,2,2} \, ,
\end{array}
\end{equation*}
where the first subindex refers to the observer ($\beta _1$ or $\beta _2$) and the second to the test particle ($\beta '_1$ or $\beta ' _2$).

Moreover, in principle (i.e. taking into account only the definitions \eqref{vfermi} and \eqref{vast}), for the Fermi and astrometric relative velocities in Figure \ref{diagram2} we have
\begin{equation*}
\begin{array}{lclclcl}
v_{\mathrm{Fermi}\,1,1} & \neq & v_{\mathrm{Fermi}\,1,2} & \neq & v_{\mathrm{Fermi}\,2,1} & \neq & v_{\mathrm{Fermi}\,2,2}\, , \\
v_{\mathrm{ast}\,1,1} & \neq & v_{\mathrm{ast}\,1,2} & \neq & v_{\mathrm{ast}\,2,1} & \neq & v_{\mathrm{ast}\,2,2}\, ,
\end{array}
\end{equation*}
as it is discussed at the beginning of Section \ref{sec3}. With regard to the acceleration of the observer at $p$, $\left( \nabla _U U\right) _p$, it is shown in Propositions \ref{propmink1} and \ref{propmink2} that, indeed, we have to know it (even in special relativity). The fact that the relative velocity of a test particle depends on the acceleration of the observer is not intuitive, but it is acceptable in the framework of relativity. On the other hand, the fact that $\beta '_1$ and $\beta '_2$ have different relative velocities is even less intuitive and less acceptable. However, as it is proved in Proposition \ref{propgen1}, this last statement is not true and we just need to know $s$ or $s_{\mathrm{obs}}$ (only at $p$). Therefore, we actually have
\begin{equation*}
\begin{array}{lclclcl}
v_{\mathrm{Fermi}\,1,1} & = & v_{\mathrm{Fermi}\,1,2} & \neq & v_{\mathrm{Fermi}\,2,1} & = & v_{\mathrm{Fermi}\,2,2}\, , \\
v_{\mathrm{ast}\,1,1} & = & v_{\mathrm{ast}\,1,2} & \neq & v_{\mathrm{ast}\,2,1} & = & v_{\mathrm{ast}\,2,2}\, .
\end{array}
\end{equation*}

\section*{Acknowledgments}

I would like to thank Prof. Vicente Miquel and Prof. Juan Antonio Navarro for their valuable help and comments.


\end{document}